\newcommand\independent{\protect\mathpalette{\protect\independent}{\perp}}
\def\independent#1#2{\mathrel{\rlap{$#1#2$}\mkern2mu{#1#2}}}
\newcommand{\mo}{\,\,\mathrm{mod}\,\,}
\newcommand{\wtil}{\widetilde}
\newcommand{\F}{\mathbb{F}}
\newcommand{\mZ}{\mathbb{Z}}
\newcommand{\pp}{\mathbb{P}}
\newtheorem{definition}{Definition}
\newtheorem{thm}{Theorem}
\newtheorem{lemma}{Lemma}
\newenvironment{sciabstract}{%
\begin{quote} \bf}
{\end{quote}}
\newcounter{lastnote}
\title{~~\\ [-1.5in]
Privacy-Preserving Methods for Sharing \\ [.125in]
Financial Risk Exposures\footnote{We thank Arnout Eikeboom, Martin Hirt,
Bob Merton, and Ron Rivest for helpful comments and discussion.  The
views and opinions expressed in this article are those of the authors only,
and do not necessarily represent the views and opinions of AlphaSimplex,
EPFL, MIT, any of their affiliates or
employees, or any of the individuals acknowledged above.
{\bf Disclosures}:  In addition to his academic affiliation, A.~Lo is also Chief Investment
Strategist of AlphaSimplex Group, LLC, a consultant to the Office of
Financial Research, a research associate of the National Bureau of
Economic Research, a member of FINRA's Economic Advisory Committee, the
New York Fed's Financial Advisory Roundtable, Moody's Academic Advisory
and Research Committee, and Beth Israel Deaconness Medical Center's Board
of Overseers.  In addition to his academic affiliation, A.~Khandani is also an associate
at Morgan Stanley.}}
\author
{Emmanuel A.\ Abbe,$^{1,2}$ Amir E.\ Khandani,$^{3}$ Andrew W. Lo$^{2,3,4\ast}$\\
\\
\normalsize{$^{1}$EPFL School of Communication and Computer Sciences}\\ [-.07in]
\normalsize{INR130 Station 14, Lausanne 1015, Switzerland}\\
\normalsize{$^{2}$MIT CSAIL and EECS}\\ [-.07in]
\normalsize{32 Vassar Street, Cambridge, MA 02139, USA}\\
\normalsize{$^{3}$MIT Laboratory for Financial Engineering}\\ [-.07in]
\normalsize{100 Main Street, Cambridge, MA 02142, USA}\\
\normalsize{$^{4}$AlphaSimplex Group, LLC}\\ [-.07in]
\normalsize{One Cambridge Center, Cambridge, MA 02142, USA}\\
\normalsize{$^\ast$Corresponding author.  Email:  alo@mit.edu}
}
\date{November 19, 2011}
\begin{document}


\baselineskip24pt


\maketitle

~~\vspace{-.8in}

\thispagestyle{empty}

\begin{sciabstract}
\baselineskip 14pt%
Unlike other industries in which intellectual property is patentable, the
financial industry relies on trade secrecy to protect its business
processes and methods, which can obscure critical financial risk exposures
from regulators and the public.  We develop methods for sharing and
aggregating such risk exposures that protect the privacy of all parties
involved and without the need for a trusted third party.  Our approach
employs secure multi-party computation techniques from cryptography in
which multiple parties are able to compute joint functions without
revealing their individual inputs.  In our framework, individual financial
institutions evaluate a protocol on their proprietary data which cannot be
inverted, leading to secure computations of real-valued statistics such a
concentration indexes, pairwise correlations, and other single- and
multi-point statistics. The proposed protocols are computationally
tractable on realistic sample sizes. Potential financial applications
include: the construction of privacy-preserving real-time indexes of bank
capital and leverage ratios; the monitoring of delegated portfolio
investments; financial audits; and the publication of new indexes of
proprietary trading strategies.
\end{sciabstract}
\newpage

\setcounter{page}{1}

\section*{Introduction}
While there is still considerable controversy over the root causes of the
Financial Crisis of 2007--2009, there is little dispute that regulators,
policymakers, and the financial industry did not have ready access to
information with which early warning signals could have been generated.
For example, prior to the Dodd Frank Act of 2010, even systemically
important financial institutions such as AIG and Lehman Brothers were not
obligated to report their amount of financial leverage, asset illiquidity,
counterparty risk exposures, market share, and other critical risk data to
any regulatory agency.  If aggregated over the entire financial industry,
such data could have played a critical role in providing regulators and
investors with advance notice of AIG's unusually concentrated position in
credit default swaps, as well as the exposure of money market funds to
Lehman bonds. Of course, such information is currently considered
proprietary and highly confidential, and releasing it into the public
domain would clearly disadvantage certain companies and benefit their
competitors.  But without this information, regulators and investors
cannot react in a timely and measured fashion to growing threats to
financial stability, thereby assuring their realization.

At the heart of this vexing challenge is privacy. Unlike other industries
in which intellectual property is protected by patents, the financial
industry consists primarily of ``business processes'' that the U.S.\
Patent Office deems unpatentable, at least until recently
\cite{lerner:2002}.  Therefore, trade secrecy has become the preferred
method by which financial institutions protect the vast majority of their
intellectual property, hence their need to limit disclosure of their
business processes, methods, and data. Forcing a financial institution to
publicly disclose its proprietary information---and without the quid pro
quo of 17-year exclusivity that a patent affords---will obviously
discourage innovation, which benefits no one.  Accordingly, government
policy has tread carefully on the financial industry's disclosure
requirements.

In this paper, we propose a new approach to financial systemic risk
management and monitoring via cryptographic computational methods in which
the two seemingly irreconcilable objectives of protecting trade secrets
and providing the public with systemic risk transparency can be achieved
simultaneously.
To accomplish these goals, we develop self-regulated protocols for
securely computing aggregate risk measures. The protocols are constructed
using secure multi-party computation tools
\cite{gmw,ccd,cddhr,bmr,yao,bgw}, specifically using secret sharing
\cite{shamir}. It is known from \cite{yao,gmw} that general Boolean
functions can be securely computed using ``circuit evaluation protocols''.
Since computing any function on real-valued data is approximated
arbitrarily well by computing a function on quantized (or binary) data,
such an approach can theoretically be used. However, for arbitrary
functions and high precision, the resulting protocols may be
computationally too demanding and therefore impractical. We show in this
paper that for computing aggregate risk measures based on standard sample
moments such as means, variances, and covariances---the typical inputs for
financial risk measures---simple and efficient protocols can be achieved
using secret-sharing over large finite fields or directly over the reals.

With the resulting measures, it is possible to compute the aggregate risk
exposures of a group of financial institutions---for example, a
concentration (or ``Herfindahl'') index of the credit default swaps
market, the aggregate leverage of the hedge-fund industry, or the
margin-to-equity ratio of all futures brokers---without jeopardizing the
privacy of any individual institution.  More importantly, these measures
will enable regulators and the public to accurately measure and monitor
the amount of risk in the financial system while preserving the
intellectual property and privacy of individual financial institutions.

Privacy-preserving risk measures may also facilitate the ability of the
financial industry to regulate itself more effectively.  Despite the long
history of ``self-regulatory organizations'' (SROs) in financial services,
the efficacy of self regulation has been sorely tested by the recent
financial crisis. However, SROs may be considerably more effective if they
had access to timely and accurate information about systemic risk that did
not place any single stakeholder at a competitive disadvantage. Also, the
broad dissemination of privacy-preserving systemic risk measures will
enable the public to respond appropriately as well, reducing general
risk-taking activity as the threat of losses looms larger due to
increasing systemic exposures. Truly sustainable financial stability is
more likely to be achieved by such self-correcting feedback loops than by
any set of regulatory measures.

\section*{Secure Protocols}
Many important statistical measures such as, mean, standard deviation,
concentration ratios, pairwise correlations can be obtained by taking summations and inner
products on the data. Therefore, we present secure protocols for these two
specific functions.

We start with a basic protocol to securely compute the sum of $m$ secret
numbers. This protocol result from an application of secret-sharing \cite{shamir} and 
basic probability results. 
We assume that each number belongs to a known range, which we
pick to be $[0,1]$ for simplicity. Recall that the operation $a$ modulo
$m$ (written $a \mo m$) produces the unique number $a+ km \in [0,m)$ where
$k$ is an integer, e.g., $3.6 \,\,\mathrm{mod}\, 2=1.6$.

\subsection*{Secure-Sum Protocol}
For $i=1,\dots,m$, each party $i$ possesses the secret number $x_i \in
[0,1]$ as an input, and the output to
each party is $s=\sum_{i=1}^m x_i$ (where the addition is over the reals).\\
The protocol is as follows:
\begin{enumerate}
\item Each pair of parties exchange privately random numbers. Namely,
    for all $i,j$ with $i\neq j$, party $i$ provides to party $j$ a
    random number $R_{ij}$ drawn uniformly at random in $[0,m]$.
\item For each $i$, party $i$ adds to its secret number the random
    numbers it has received from other parties and subtracts the
    random numbers it has provided to other parties.
More formally, party $i$ computes $S_i=x_i+\sum_{j\in \{1,\dots,m\}
\atop{j \neq i}} R_{ji}  - \sum_{j\in \{1,\dots,m\} \atop{j \neq i}}
R_{ij} \,\,\mathrm{mod}\, m$. Each party publicly reveals $S_i$.
\item Each party computes $S=\sum_{i=1}^m S_i \,\,\mathrm{mod}\,m$, which equals $s=\sum_{i=1}^m x_i$.
\end{enumerate}

\vspace{2ex}\noindent%
{\bf Numerical example.} Let $m\!=\!3$ (i.e., three parties),
$x_1\!=\!0.1$, $x_2\!=\!0.2$ and $x_3\!=\!0.3$. In the first round of the
protocol, the parties exchange random numbers $R_{ij}$. For example,

\begin{center}
  \begin{tabular}{ l || c | c | c | }
    &Party 1 & Party 2& Party 3 \\ \hline \hline
   Party 1 provides  && 1.4 & 2.1 \\ \hline
  Party 2 provides&1.1&&2.3 \\ \hline
    Party 3 provides& 0.3& 2.9& \\ \hline
      \end{tabular}
\end{center}

\noindent%
In the second round, party $i$ adds to its secret number the
elements of the $i$-th column and subtract the elements of the $i$-th row
(using modulo 3 arithmetic). Each party publishes the result $S_i$:

\begin{center}
  \begin{tabular}{ | c | c | c | }
    $S_1$ & $S_2$& $S_3$ \\ \hline \hline
  1 & 1.1 & 1.5 \\ \hline
      \end{tabular}
\end{center}
Finally, the parties add these numbers (modulo 3) and compute the output
sum:
$$s=3.6 \mo 3 = 0.6.$$

\vspace{2ex}\noindent%
{\bf Protocol correctness and secrecy}. If the parties follow the protocol
correctly, it is easy to check that the correct sum is always obtained,
since each element $R_{ij}$ is added and subtracted once in $S$. In
addition, we show that this protocol reveals nothing else about the secret
numbers than their sum, even if the parties attempt to infer more from the
exchanged data. For example, Party 1 may try to learn more about other
parties' secret numbers by using the information gathered in
$S_1,S_2,S_3$. We state informally the secrecy guarantee in the following
theorem and provide a formal statement and proof in the appendix. We first
illustrate a weaker fact here by plotting the values of $S_1,S_2,S_3$ for
several realizations of the random numbers $R_{ij}$, while keeping fixed
$x_1\!=\!0.1$, $x_2\!=\!0.2$ and $x_3\!=\!0.3$. As shown in Figure
\ref{slices}, the realizations of $(S_1,S_2,S_3)$ uniformly cover the set
of points $(s_1,s_2,s_3)$ for which $s_1\!+\!s_2\!+\!s_3
\,\,\mathrm{mod}\, 3=0.6$, suggesting that there is no relevant
information in the $S_i$'s other than their sum.

\begin{figure}
\begin{center}
\includegraphics[scale=.7]{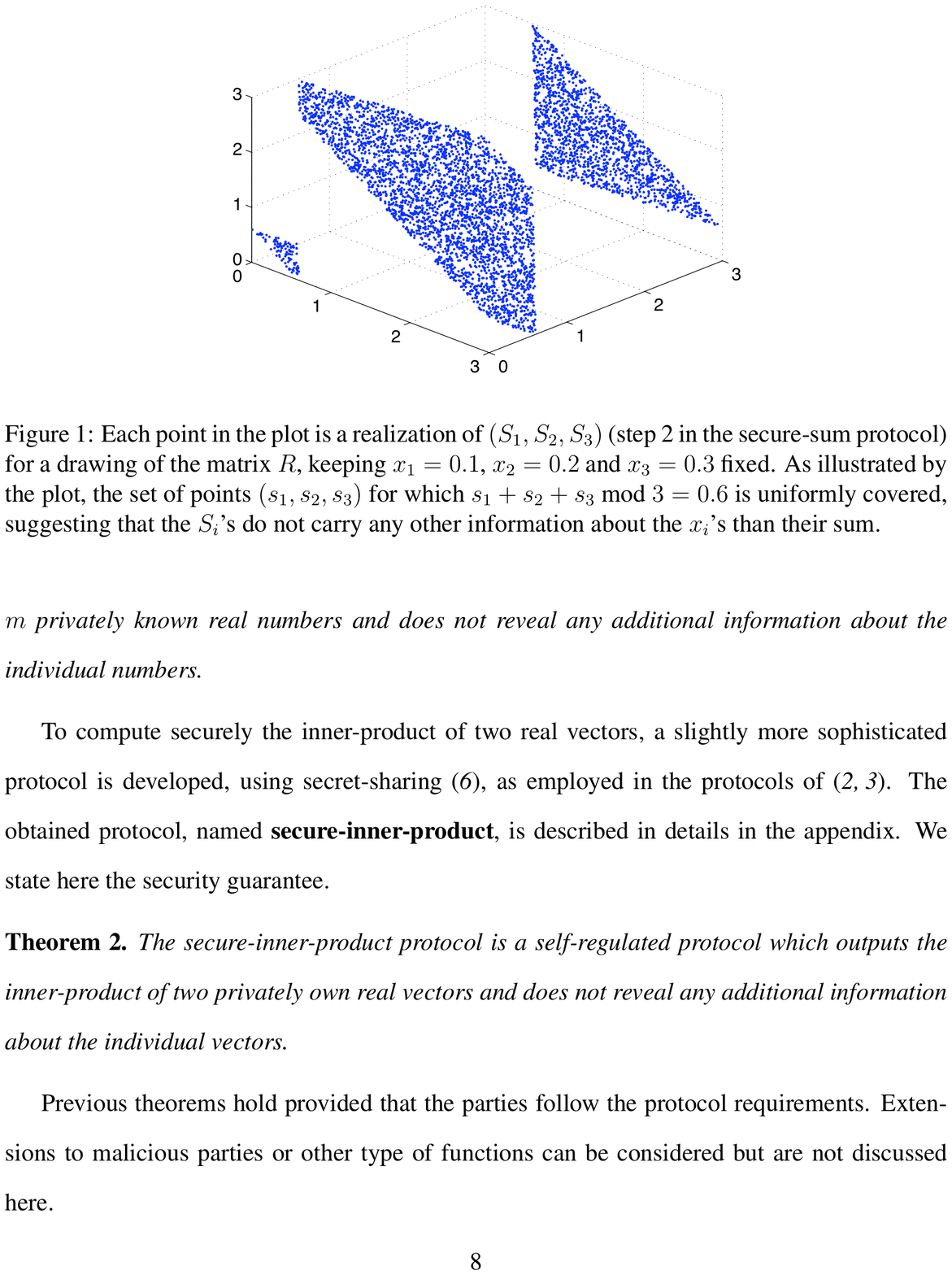}
\caption{Each point in the plot is a realization of $(S_1,S_2,S_3)$ (step 2 in the
Secure-Sum protocol) for a drawing of the matrix $R$, keeping $x_1=0.1$,
$x_2\!=\!0.2$ and $x_3\!=\!0.3$ fixed. As illustrated by the plot, the set of
points $(s_1,s_2,s_3)$ for which $s_1\!+\!s_2\!+\!s_3 \,\,\mathrm{mod}\,3=0.6$ is
uniformly covered, suggesting that the $S_i$'s do not carry any other
information about the $x_i$'s than their sum.} \label{slices}
\end{center}
\end{figure}

The following is obtained assuming that parties follow the protocol requirements without deviating from it. 
\begin{thm}\label{thm-sum-inform}
The Secure-Sum protocol outputs the sum of $m$ privately owned real
numbers and does not reveal any additional information about the
individual numbers.
\end{thm}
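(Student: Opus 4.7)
The theorem has two parts: correctness and information-theoretic secrecy. I would prove them separately.

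\emph{Correctness.} I would expand $S \equiv \sum_{i=1}^{m} S_i \pmod{m}$ and observe that each random number $R_{ij}$ appears exactly twice in this sum, once with a $+$ sign in $S_j$ and once with a $-$ sign in $S_i$, so all the masks cancel modulo $m$. Thus $S \equiv \sum_i x_i \pmod{m}$, and since each $x_i \in [0,1]$, the true sum lies in $[0,m]$, so the modular reduction recovers it exactly (except at the measure-zero boundary $\sum_i x_i = m$).

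\emph{Secrecy via simulation.} Following the standard semi-honest simulation paradigm, I would show that for every proper coalition $T \subsetneq \{1,\dots,m\}$ the joint distribution of $T$'s view---namely $(x_i)_{i \in T}$, the $R_{ij}$ that parties in $T$ either sent or received, and all the broadcasts $(S_1,\dots,S_m)$---conditioned on the output $s$, is a function only of $(x_i)_{i \in T}$ and $s$, and in particular does not depend on the individual inputs $(x_j)_{j \notin T}$. Equivalently, I construct a simulator that, on input $(x_i)_{i \in T}$ and $s$, draws the $R_{ij}$ touching $T$ uniformly on $[0,m]$ as in the real protocol, computes $(S_i)_{i \in T}$ directly from these, and then draws $(S_j)_{j \notin T}$ uniformly on the affine slice $\{(s_j)_{j \notin T} : \sum_{j \notin T} s_j \equiv s - \sum_{i \in T} S_i \pmod{m}\}$.

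\emph{Main lemma.} The substantive step is to prove that in the real protocol, the conditional distribution of $(S_j)_{j \notin T}$, given the remainder of $T$'s view, is exactly this uniform distribution on the affine slice. Writing $S_j \equiv x_j + Y_j + Z_j \pmod{m}$ where $Z_j$ collects the $R_{kj} - R_{jk}$ terms with $k \in T$ (all known to $T$) and $Y_j$ collects the $R_{kj} - R_{jk}$ terms with $k \in T^c \setminus \{j\}$ (hidden from $T$), the goal reduces to showing that the vector $(Y_j)_{j \notin T}$ is uniform on the subgroup $\{(y_j) : \sum_j y_j \equiv 0 \pmod{m}\}$ and independent of $T$'s view. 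Independence is immediate because the hidden $Y_j$'s depend only on $R$-values with both endpoints outside $T$, and those are drawn independently of everything $T$ sees. For the uniformity, I would change variables to $D_{k\ell} := R_{\ell k} - R_{k\ell} \pmod{m}$ for pairs $k < \ell$ in $T^c$, which are $\binom{|T^c|}{2}$ i.i.d.\ uniform variates on $[0,m)$; each $Y_j$ is a signed sum of these. Fixing any spanning tree on $T^c$ (say a star at some root $r$), the $Y_j$ for $j \neq r$ are invertible affine functions of the corresponding tree-edge $D$'s with the remaining $D$'s held fixed, hence $(Y_j)_{j \neq r}$ is uniform on $[0,m)^{|T^c|-1}$, and $Y_r$ is determined by the identity $\sum_{j \in T^c} Y_j \equiv 0 \pmod{m}$.

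\emph{Expected obstacle.} The algebraic checks and the construction of the simulator are routine; the delicate part is the uniformity lemma, in particular packaging the sum-zero constraint and independence from $T$'s view cleanly enough that the simulator's distribution matches the real one term-for-term. One point that must be flagged rather than hidden is the case $|T^c| = 1$: a coalition of $m-1$ parties can always recover the remaining secret from $s$ and their own inputs, but this is an inherent feature of computing the sum, not a leakage of the protocol, so the theorem statement should be (and typically is) interpreted as saying no information beyond $s$ is revealed.
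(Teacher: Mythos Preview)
Your proof is correct and follows the same high-level architecture as the paper: cancellation of the $R_{ij}$'s for correctness, and for secrecy the reduction of the non-coalition broadcasts to $x_j$ plus a mask vector that is shown to be uniform on the sum-zero slice of $[0,m)^{|T^c|}$. Two differences are worth noting. First, the paper's formal statement and proof (its Theorem~\ref{sum-thm}) treat only a single honest-but-curious party, whereas you handle an arbitrary proper coalition $T$; your version is strictly more general and is what one actually wants for the informal claim of Theorem~\ref{thm-sum-inform}. Second, for the uniformity lemma the paper argues linear-algebraically: it writes the hidden mask vector as $Z=MW$ with $W$ i.i.d.\ uniform, computes $\rank M = m-2$, identifies $\mathrm{Im}(M)$ with the sum-zero hyperplane, and then uses translation invariance of the uniform law on $[0,m)$ to conclude that $MW$ is uniform on that hyperplane. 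Your route---passing to the antisymmetric variables $D_{k\ell}=R_{\ell k}-R_{k\ell}$, fixing a star, and reading off a triangular bijection between the star-edge $D$'s and $(Y_j)_{j\neq r}$---is a cleaner combinatorial substitute that avoids checking the rank of $M$ and generalizes immediately to arbitrary $T$. Both arguments ultimately exploit the same fact (a surjective affine map from i.i.d.\ uniforms on the torus $[0,m)$ onto the sum-zero subgroup pushes forward to the uniform law there), so the approaches are close cousins rather than genuinely distinct strategies.
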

This theorem follows directly from secret-sharing \cite{shamir} and basic probability results.
For convenience, we provide a proof in the Appendix.

\subsection*{Secure-Inner-Product Protocol}
To compute securely the inner product of two real vectors, slightly more
sophisticated protocols are developed and presented in the appendix, using
basic secret sharing \cite{shamir}, secret-sharing as employed in
\cite{bgw,ccd,cddhr}, and Oblivious Transfer \cite{rabinOT,OT}.
The variants include information-theoretic and cryptographic protocols on
quantized or real data, and have different attributes discussed in the
appendix. We state here an informal result regarding one of these
protocols which we call Secure-Inner-Product protocol 1.

\begin{thm}\label{thm-ip-inform}
The Secure-Inner-Product protocol 1 outputs the sum of two privately owned
quantized vectors and does not reveal any additional information about the
individual vectors.
\end{thm}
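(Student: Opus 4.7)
The plan is to establish the two halves of the claim separately: (i) correctness of the output under honest execution, and (ii) the absence of information leakage beyond what is entailed by the output. Because the building blocks listed just above the theorem are Shamir secret-sharing \cite{shamir} and Oblivious Transfer \cite{rabinOT,OT}, I would work in the standard simulation-based paradigm of secure multi-party computation, specialised to the two-party inner-product setting over a large finite field $\F_q$ into which the quantized coordinates are embedded as integers.

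For correctness, I would trace the protocol coordinate by coordinate. Each term $x_i y_i$ is produced either as the reconstruction of Shamir shares whose interpolation returns $x_i y_i \mo q$, or as $\sum_k 2^k (x_i \cdot y_i^{(k)})$ where the bits $y_i^{(k)}$ of the quantized value are selected through OT and the party holding $x_i$ responds with the corresponding multiple. Summing these reconstructions over $i$ and lifting back from $\F_q$ to $\mZ$ (and then to the original quantized grid) yields $\sum_i x_i y_i$ on the nose, as long as $q$ is chosen so that the true integer inner product never wraps around $\mo q$. This is a routine verification once the protocol is written explicitly, and it mirrors the check already carried out for Theorem~\ref{thm-sum-inform}.

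For privacy, I would exhibit an efficient simulator $\mathrm{Sim}_P$ for each party $P\in\{1,2\}$. Given only $P$'s input vector and the scalar output $\sum_i x_i y_i$, $\mathrm{Sim}_P$ must produce a transcript whose distribution matches $P$'s view in a real execution. Two facts drive the construction: (a) Shamir's threshold property guarantees that any sub-threshold collection of shares is uniform on $\F_q$ and statistically independent of the underlying secret, so the masks $P$ observes can be simulated by i.i.d.\ uniform field elements; and (b) the OT guarantee ensures that the sender learns nothing about the receiver's selection bit and the receiver learns nothing about the unselected message. Combining (a) and (b), the simulator samples uniformly random shares and OT messages subject only to the single linear constraint imposed by consistency with the publicly reconstructed output, and by the information-theoretic properties of secret sharing these constrained uniform samples are identically distributed to the real view.

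The step I expect to be hardest is the composition argument: showing that invoking many OT instances and many secret-sharing sub-protocols in parallel (or sequentially) does not leak more than the sum of the per-invocation guarantees. In the information-theoretic variant this is immediate because each simulated component is perfectly indistinguishable. In the cryptographic variant based on computational OT, I would use a standard hybrid argument, swapping one real OT invocation at a time for its simulated counterpart and bounding the accumulated distinguishing advantage by a polynomial multiple of the single-shot OT security parameter, which remains negligible. A minor but necessary side lemma is the quantitative choice of $q$ ensuring no modular overflow; this reduces to bounding $q$ below by a simple function of the vector dimension and the quantization precision.
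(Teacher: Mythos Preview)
Your proposal analyzes the wrong protocol. The paragraph preceding the theorem lists \emph{all} the tools used across the three inner-product protocols in the appendix, but Protocol~1 specifically is a three-party protocol (two data owners plus a dummy helper party) built purely from \emph{additive} secret sharing over a prime field $\F_p$ with $p>q^2n$; it involves no Oblivious Transfer and no bit decomposition. Each party splits every coordinate into three uniform additive shares in $\Sigma_3(\cdot,\F_p)$, the parties locally form partial products $p_i(1),p_i(2),p_i(3)$ whose sum is $x_iy_i$, and then steps~2--3 are exactly an instance of the Secure-Sum protocol applied to $\rho(1),\rho(2),\rho(3)$. The OT-based two-party analysis you sketch pertains to Protocol~3, and the polynomial (Shamir-style) sharing with interpolation pertains to Protocol~2.

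Consequently, the key step in the paper's proof is not a hybrid over OT instances or a per-coordinate simulator, but a \emph{reduction to Theorem~\ref{thm-sum-inform}}: once one checks the algebraic identity $p_i(1)+p_i(2)+p_i(3)=x_iy_i$ (hence $\rho(1)+\rho(2)+\rho(3)=\rho$), the remaining view of party~1 consists of the independent uniform shares $y(1),y(2)$ together with the transcript of a Secure-Sum run on inputs $\rho(1),\rho(2),\rho(3)$. Theorem~\ref{sum-thm} then gives that this transcript depends only on $\rho$ and on $\rho(1)$, and $\rho(1)$ is determined by $x$ and the uniform $y(1),y(2)$. Your correctness paragraph and your ``no overflow'' remark ($p>q^2n$) are in the right spirit, but the privacy argument you propose---simulating OT messages and sub-threshold Shamir shares, followed by a cryptographic hybrid---does not match Protocol~1 and would not go through as written, since there are no OT calls to simulate and the security is information-theoretic rather than computational.
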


\noindent%
Note that the previous two theorems hold provided that the parties follow
the protocol requirements (without colluding or cheating). Extensions to malicious parties or other type
of functions can also be developed but are not discussed here.

\section*{Illustrative Example}
%


To illustrate the practical implementation of privacy-preserving measures,
we provide a simple numerical example using publicly available quarterly
data from June 1986 to December 2010 (released in arrears by the U.S.\
Federal Reserve) on the total amount of outstanding loans linked to real
estate issued by three major bank holding companies: Bank of America,
JPMorgan, and Wells Fargo \cite{bhc_data}. Suppose that the aggregate
value of these loans across the three banks is the risk exposure of
interest, and the magnitude of outstanding loans for each bank is the
proprietary data to be kept private.  The historical time series of these
data are displayed in Figure \ref{fig_bhc:subfig1}; the bar graph in blue
is the aggregate risk exposure to be computed and the three line graphs
are the proprietary inputs.

The desired result can be obtained with an application of the Secure-Sum
protocol described above \cite{protocol_note}, which consists of two
steps. In the first step, each institution produces two random numbers to
be shared, one for each of the other two participating institutions. These
numbers are shown in line graphs of Figure \ref{fig_bhc:subfig2} where the
color coding indicates the institution generating the random numbers.
Since these numbers are purely random, there is no relationship between
them and the private data of Figure \ref{fig_bhc:subfig1}, a fact that is
clear from visual inspection of the intermediate outputs in Figure
\ref{fig_bhc:subfig2}.

In the second step of the Secure-Sum protocol, each institution uses its
private data, the two numbers it receives from the other two participating
banks, as well as the two numbers it sends to the other two institutions
to produce a single value, which we refer to as the privacy-preserving
measure of its private data.  This value will be revealed to the other two
institutions. While these privacy-preserving measures, shown in Figure
\ref{fig_bhc:subfig3}, seem like a pure noise, they have just enough of
the original data so that the sum of these three numbers under modulo
arithmetic yields the correct sum of the original inputs. The key here is
that the randomness produced in the first step, as shown in Figure
\ref{fig_bhc:subfig2}, {\it exactly\/} cancels in the second step due to
the way that the protocol in constructed. It is apparent that the
aggregate loans outstanding in Figure \ref{fig_bhc:subfig3} is identical
to the corresponding graph in Figure \ref{fig_bhc:subfig1}, but the former
graph has been computed using only the privacy-preserving measures of
Figure \ref{fig_bhc:subfig3}.

Despite the fact that the underlying data used in this example is not
confidential, even in this simple illustrative case privacy-preserving
measures may still prove useful in providing financial institutions and
regulators with an incentive to release the data without a lag.  More
timely releases would obviously benefit all stakeholders by allowing them
to respond more nimbly to changing market conditions, but such releases
could also disadvantage certain parties in favor of others if privacy were
not assured.  Moreover, this example underscores the simplicity with which
more sensitive data such as leverage ratios, positions in illiquid assets,
and off-balance-sheet derivatives holdings can be shared regularly,
securely, and in a timely fashion.

We consider only three institutions in this example because it is the
simplest non-trivial case in which privacy-preserving measures of
aggregate sums can be constructed.  Clearly, the protocol is applicable
for any number of participants greater than two, and implementation for
even several thousand participants is extremely fast. More complex risk
exposures such as Herfindahl concentration indices require two
applications of the Secure-Sum protocol, but the computational burdens are
still quite modest. The Secure-Inner-Product protocol can be used to
construct multi-point statistical measures such as average correlations
between changes in securities holdings or leverage across industry
participants.

\begin{figure}[htbp]
\centering
\begin{minipage}{\linewidth}
\begin{minipage}{.45\linewidth}
\subfigure[]{
\includegraphics[scale=0.3,clip,trim=0 90 0 0]{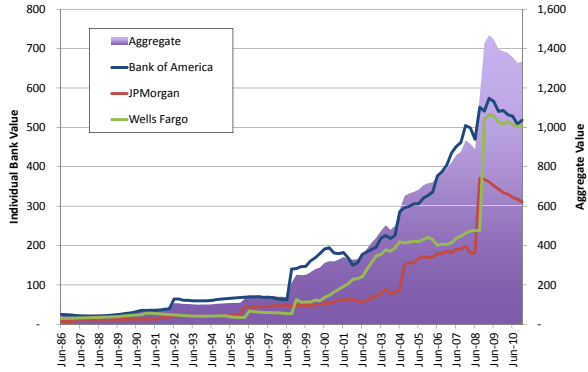}
\label{fig_bhc:subfig1}}
\end{minipage}
~~\hspace{.5in}
\begin{minipage}{.45\linewidth}
\subfigure[]{
\includegraphics[scale=0.3,clip,trim=0 75 0 0]{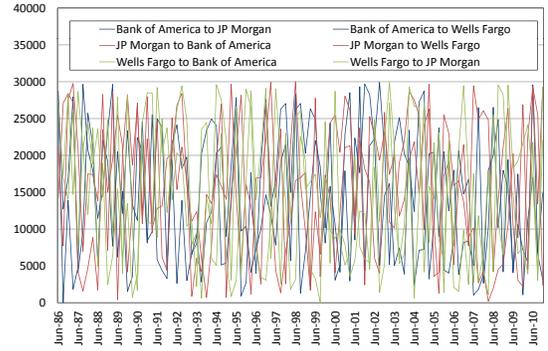}
\label{fig_bhc:subfig2}}
\end{minipage}
\\ [-.25in]
\begin{minipage}{.45\linewidth}
\subfigure[]{
\includegraphics[scale=0.33,clip,trim=35 75 0 0]{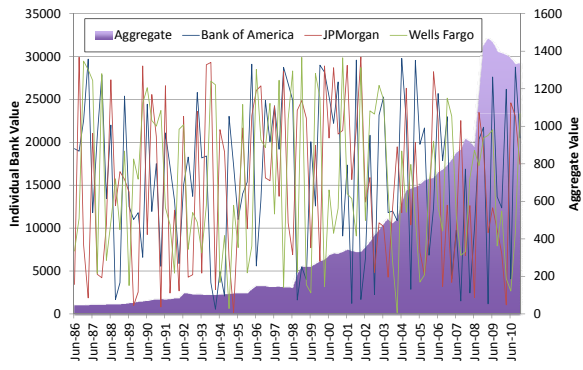}
\label{fig_bhc:subfig3}}
\end{minipage}
\end{minipage}
\caption{An illustration of a privacy-preserving measure of
the aggregate amount of real-estate-linked loans outstanding for
Bank of America, JPMorgan, and Wells Fargo from June 1986 to December 2010.
Panel (a) contains the three historical quarterly time series of outstanding
outstanding loans which is private and the aggregate sum which we wish to
compute securely.  Panel (b) contains the six time series of intermediate
numbers that are exchanged bilaterally between all pairs.  Panel (c)
contains the three privacy-preserving values that are shared between all banks
and used to compute the aggregate sum, which is identical to the aggregate
sum in Panel (a).} \label{fig_bhc}
\end{figure}

\section*{Discussion}
By construction, privacy-preserving measures of financial risk exposures
cannot be ``reverse-engineered'' to yield information about the individual
constituents.  Accordingly, there is no guarantee that the individual
inputs are truthful.  In this respect, the potential for misreporting and
fraud are no different for these measures than they are for current
reporting obligations by financial institutions to their regulators, and
existing mechanisms for ensuring compliance---random periodic examinations
and severe criminal and civil penalties for misleading disclosures---must
be applied here as well.

However, unlike traditional regulatory disclosures, privacy-preserving
measures will provide its users with a strong incentive to be truthful
because the mathematical guarantee of privacy eliminates the primary
motivation for obfuscation.  Since each institution's proprietary
information remains private even after disclosure, dishonesty yields no
discernible benefits but could have tremendous reputational costs, and
this asymmetric payoff provides significantly greater economic incentive
for compliance.  Moreover, accurate and timely measures of system-wide
risk exposures can benefit the entire industry in allowing institutions
and investors to engage in self-correcting behavior that can reduce the
likelihood of systemic shocks.  For example, if all stakeholders were able
to monitor the aggregate amount of leverage in the financial system at all
times, there is a greater chance that market participants would become
more wary and less aggressive as they observe leverage rising beyond
prudent levels.

A related issue is whether participation in privacy-preserving disclosures
of financial risk exposures is voluntary or mandated by regulation.  Given
the extremely low cost/benefit ratio of such disclosures, there is reason
to believe that the financial industry may well adopt such disclosures
voluntarily.  A case in point is Markit, a successful industry consortium
of dealers of credit default swaps (CDS) that emerged in 2001 to pool
confidential pricing data on individual CDS transactions and make the
anonymized data available to each other and the public so as to promote
transparency and liquidity in this market \cite{markit}.  According to
Markit's website, the data of its consortium members are
``$\ldots$provided on equal terms to whoever wanted to use it, with the
same data released to all customers at the same time, giving both the
sell-side and buy-side access to exactly the same daily valuation and risk
management information''.  From this carefully crafted statement, it is
clear that equitable and easy access to data is of paramount importance in
structuring this popular data-sharing consortium.  Privacy-preserving
methods of sharing information could greatly enhance the efficacy and
popularity of such cooperatives.

The same motivation applies to the sharing of aggregate financial risk
exposures, but with even greater stakes as the recent financial crisis has
demonstrated.  Once a privacy-preserving system-risk-exposures consortium
is established, the benefits will so clearly dominate the nominal costs of
participation that it should gain widespread acceptance and adoption in
short order.  Indeed, participation in such a consortium may serve as a
visible commitment to industry best practices that yields tangible
benefits for business development, leading to a ``virtuous cycle'' of
privacy-preserving risk disclosure throughout the financial industry

\section*{Conclusion}

Privacy-preserving measures of financial risk exposures solve the
challenge of measuring aggregate risk among multiple financial
institutions without encroaching on the privacy of any individual
institution.  Previous approaches to addressing this challenge require
trusted third parties, i.e., regulators, to collect, archive, and properly
assess systemic risk.  Apart from the burden this places on government
oversight, such an approach is also highly inefficient, requiring properly
targeted and perfectly timed regulatory intervention among an increasingly
complex and dynamic financial system.  Privacy-preserving measures can
promote more efficient ``crowdsourced'' responses to emerging threats to
systemic stability, enabling both regulators and market participants to
accurately monitor systemic risks in a timely and coordinated fashion,
creating a more responsive negative-feedback loop for stabilizing the
financial system.  This feature may be especially valuable for promoting
international coordination among multiple regulatory jurisdictions.  While
a certain degree of regulatory competition is unavoidable given the
competitive nature of sovereign governments, privacy-preserving measures
do eliminate a significant political obstacle to regulatory collaboration
across national boundaries.

Privacy-preserving risk measures have several other financial and
non-financial applications.  Investors such as endowments, foundations,
pension and sovereign wealth funds can use these measures to ensure that
their investments in various proprietary vehicles---hedge funds, private
equity, and other private partnerships---are sufficiently diversified and
not overly concentrated in a small number of risk factors.  Financial
auditors charged with the task of valuing illiquid assets at a given
financial institution can use these measures to compare and contrast their
valuations with the industry average and the dispersion of valuations
across multiple institutions.  Real-time indexes of the aggregate amount
of hedging activity in systemically important markets like the S\&P 500
futures contract may be constructed, which could have served as an early
warning signal for the ``Flash Crash'' of May 6, 2010.

More broadly, privacy-preserving measures of risk exposures may be useful
in other industries in which aggregate risks are created by individual
institutions and where maintaining privacy in computing such risks is
important for promoting transparency and innovation, such as healthcare,
epidemiology, and agribusiness.

\newpage

\bibliography{crypto}

\begin{thebibliography}{10}
\expandafter\ifx\csname url\endcsname\relax
  \def\url#1{\texttt{#1}}\fi
\expandafter\ifx\csname urlprefix\endcsname\relax\def\urlprefix{URL }\fi
\providecommand{\bibinfo}[2]{#2}
\providecommand{\eprint}[2][]{\url{#2}}

\bibitem{lerner:2002}
\bibinfo{author}{Lerner, J.}
\newblock \bibinfo{title}{Where does {\it state street\/} lead?: {A} first look
  at financial patents, 1971--2000}.
\newblock \emph{\bibinfo{journal}{Journal of Finance}}
  \textbf{\bibinfo{volume}{57}}, \bibinfo{pages}{901--930}
  (\bibinfo{year}{2002}).

\bibitem{gmw}
\bibinfo{author}{Goldreich, O.}, \bibinfo{author}{Micali, S.} \&
  \bibinfo{author}{Wigderson, A.}
\newblock \bibinfo{title}{How to play any mental game}.
\newblock In \emph{\bibinfo{booktitle}{ACM Sympos. on Theory of Comput.
  (STOC)}}, \bibinfo{pages}{218--229} (\bibinfo{address}{New York, NY},
  \bibinfo{year}{1987}).

\bibitem{ccd}
\bibinfo{author}{Chaum, D.}, \bibinfo{author}{Cr\'{e}peau, C.} \&
  \bibinfo{author}{Damgard, I.}
\newblock \bibinfo{title}{Multiparty unconditionally secure protocols}.
\newblock In \emph{\bibinfo{booktitle}{Proceedings of the twentieth annual ACM
  symposium on Theory of computing}}, STOC '88, \bibinfo{pages}{11--19}
  (\bibinfo{year}{1988}).

\bibitem{cddhr}
\bibinfo{author}{Cramer, R.}, \bibinfo{author}{Damgard, I.},
  \bibinfo{author}{Dziembowski, S.}, \bibinfo{author}{Hirt, M.} \&
  \bibinfo{author}{Rabin, T.}
\newblock \bibinfo{title}{Efficient multiparty computations with dishonest
  minority}.
\newblock In \emph{\bibinfo{booktitle}{Proceedings of EuroCrypt, Springer
  Verlag LNCS series}} (\bibinfo{year}{1999}).

\bibitem{bmr}
\bibinfo{author}{Beaver, D.}, \bibinfo{author}{Micali, S.} \&
  \bibinfo{author}{Rogaway, P.}
\newblock \bibinfo{title}{The round complexity of secure protocols}.
\newblock In \emph{\bibinfo{booktitle}{ACM Sympos. on Theory of Comput.
  (STOC)}}, \bibinfo{pages}{503--513} (\bibinfo{address}{New York, NY},
  \bibinfo{year}{1990}).

\bibitem{yao}
\bibinfo{author}{Yao, A.~C.}
\newblock \bibinfo{title}{Protocols for secure computations}.
\newblock In \emph{\bibinfo{booktitle}{23rd Annual Symposium on Foundations of
  Computer Science (FOCS)}}, \bibinfo{pages}{160--164} (\bibinfo{year}{1982}).

\bibitem{bgw}
\bibinfo{author}{Ben-Or, M.}, \bibinfo{author}{Goldwasser, S.} \&
  \bibinfo{author}{Wigderson, A.}
\newblock \bibinfo{title}{Completeness theorems for non-cryptographic
  fault-tolerant distributed computation}.
\newblock In \emph{\bibinfo{booktitle}{ACM Sympos. on Theory of Comput.
  (STOC)}}, \bibinfo{pages}{1--10} (\bibinfo{address}{New York, NY},
  \bibinfo{year}{1988}).

\bibitem{shamir}
\bibinfo{author}{Shamir, A.}
\newblock \bibinfo{title}{How to share a secret}.
\newblock \emph{\bibinfo{journal}{Communications of the ACM}}
  \textbf{\bibinfo{volume}{22}}, \bibinfo{pages}{612--613}
  (\bibinfo{year}{1979}).

\bibitem{rabinOT}
\bibinfo{author}{Rabin, M.~O.}
\newblock \bibinfo{title}{How to exchange secrets by oblivious transfer}.
\newblock In \emph{\bibinfo{booktitle}{Technical Report TR-81}}
  (\bibinfo{year}{1981}).

\bibitem{OT}
\bibinfo{author}{Even, S.}, \bibinfo{author}{Goldreich, O.} \&
  \bibinfo{author}{Lempel, A.}
\newblock \bibinfo{title}{A randomized protocol for signing contracts}.
\newblock In \emph{\bibinfo{booktitle}{Communications of the ACM}},
  vol.~\bibinfo{volume}{28}, \bibinfo{pages}{637--647} (\bibinfo{year}{1985}).

\bibitem{bhc_data}
\bibinfo{note}{We have used series BHCK1410 (Loans secured by real estate)
  which is disclosed by US Bank Holding Companies to the Federal Reserve via
  form FR Y9-C. For Bank of America we use RSSD ID 1026016 prior to 1998-09 and
  RSSD ID 1073757 after, for Wells Fargo we use RSSD ID 1027095 prior to
  1998-12 and RSSD ID 1120754 after, and for JP Morgan we use RSSD ID 1039502.}

\bibitem{protocol_note}
\bibinfo{note}{The Secure Sum Protocol discussed previously assume input are in
  range of [0,1]. The protocol works equally with any range and in this
  application we first turned the raw data in billions of dollars and then
  assumed a range of [0,10000] for all inputs. This implies that all arithmetic
  is done using modulo 30000 (3x10000).}

\bibitem{markit}
\bibinfo{note}{This consortium---currently known as ``Markit''---was original
  called ``Mark-It Partners'' and in a December 16, 2003 press release, the
  company described itself in the following way: ``Founded early in 2001 with
  the support of contributing data partners ABN Amro, Bank of America,
  CitiGroup, CSFB, Deutsche Bank, Dresdner Kleinwort Wasserstein, Goldman
  Sachs, JP Morgan, Lehman Brothers, Merrill Lynch, Morgan Stanley, TD
  Securities, and UBS, Mark-It Partners currently receives daily credit default
  swap (CDS) pricing on over 3,700 issues and receives pricing on over 30,000
  cash securities. These banks feed current and historical credit data into the
  Mark-it Partners system on a daily basis, facilitating better decision-making
  and credit risk management within banks' credit operations.''}.

\bibitem{rsa}
\bibinfo{author}{Rivest, R.}, \bibinfo{author}{Shamir, A.} \&
  \bibinfo{author}{Adleman, L.}
\newblock \bibinfo{title}{A method for obtaining digital signatures and
  public-key cryptosystems}.
\newblock \emph{\bibinfo{journal}{Communications of the ACM}}
  \textbf{\bibinfo{volume}{21}}, \bibinfo{pages}{120--126}
  (\bibinfo{year}{1978}).

\bibitem{goldreich}
\bibinfo{author}{Goldreich, O.}
\newblock \bibinfo{title}{Secure multi-party computation (working draft)}.
\newblock \emph{\bibinfo{journal}{Available from
  http://www.wisdom.weizmann.ac.il/home/oded/public html/ foc.html}}
  (\bibinfo{year}{1998}).

\bibitem{amort}
\bibinfo{author}{Naor, M.} \& \bibinfo{author}{Pinkas, B.}
\newblock \bibinfo{title}{Efficient oblivious transfer protocols}.
\newblock In \emph{\bibinfo{booktitle}{Proceedings of the SIAM Symposium on
  Discrete Algorithms (SODA)}} (\bibinfo{address}{Washington DC},
  \bibinfo{year}{2001}).

\bibitem{cramer}
\bibinfo{author}{Cramer, R.}, \bibinfo{author}{Damgaard, I.} \&
  \bibinfo{author}{Nielsen, J.~B.}
\newblock \bibinfo{title}{Multi-party computations from threshold homomorphic
  encryption}.
\newblock In \emph{\bibinfo{booktitle}{Proceedings of 20th Annual IACR
  EUROCRYPT}}, vol. \bibinfo{volume}{2045}, \bibinfo{pages}{280--300}
  (\bibinfo{publisher}{Springer Verlag LNCS}, \bibinfo{address}{Innsbruck,
  Austria}, \bibinfo{year}{2001}).

\bibitem{franklin}
\bibinfo{author}{Franklin, M.} \& \bibinfo{author}{Haber, S.}
\newblock \bibinfo{title}{Joint encryption and message-efficient secure
  computation}.
\newblock \emph{\bibinfo{journal}{Journal of Cryptology}}
  \textbf{\bibinfo{volume}{9}}, \bibinfo{pages}{217--232}
  (\bibinfo{year}{1996}).

\bibitem{gentry}
\bibinfo{author}{Gentry, C.}
\newblock \bibinfo{title}{Fully homomorphic encryption using ideal lattices}.
\newblock In \emph{\bibinfo{booktitle}{ACM Sympos. on Theory of Comput.
  (STOC)}}, \bibinfo{pages}{169--178} (\bibinfo{year}{2009}).

\bibitem{bv}
\bibinfo{author}{Brakerski, Z.} \& \bibinfo{author}{Vaikuntanathan, V.}
\newblock \bibinfo{title}{Efficient fully homomorphic encryption from
  (standard) {LWE}}.
\newblock In \emph{\bibinfo{booktitle}{23rd Annual Symposium on Foundations of
  Computer Science (FOCS)}} (\bibinfo{year}{2011}).

\bibitem{vote}
\bibinfo{author}{Damgard, I.}, \bibinfo{author}{Groth, J.} \&
  \bibinfo{author}{Salomonsen, G.}
\newblock \bibinfo{title}{The theory and implementation of an electronic voting
  system}.
\newblock \bibinfo{pages}{77--100} (\bibinfo{publisher}{Kluwer Academic
  Publishers}, \bibinfo{year}{2002}).

\bibitem{auction}
\bibinfo{author}{Naor, M.}, \bibinfo{author}{Pinkas, B.} \&
  \bibinfo{author}{Sumner, R.}
\newblock \bibinfo{title}{Privacy preserving auctions and mechanism design}.
\newblock In \emph{\bibinfo{booktitle}{Proceedings of the 1st ACM conference on
  Electronic commerce}} (\bibinfo{year}{1999}).

\bibitem{datamining}
\bibinfo{author}{Lindell, Y.} \& \bibinfo{author}{Pinkas, B.}
\newblock \bibinfo{title}{Privacy preserving data mining}.
\newblock \emph{\bibinfo{journal}{Lecture Notes in Computer Science}}
  \textbf{\bibinfo{volume}{1880}}, \bibinfo{pages}{36--54}
  (\bibinfo{year}{2000}).

\bibitem{sign}
\bibinfo{author}{Chaum, D.}
\newblock \bibinfo{title}{Blind signatures for untraceable payments}.
\newblock \emph{\bibinfo{journal}{Lecture Notes in Computer Science}}
  \textbf{\bibinfo{volume}{1880}}, \bibinfo{pages}{36--54}
  (\bibinfo{year}{2000}).

\bibitem{live}
\bibinfo{author}{Bogetoft, P.} \emph{et~al.}
\newblock \bibinfo{title}{Multiparty computation goes live}.
\newblock \emph{\bibinfo{journal}{Cryptology ePrint Archive, Report 2008/068}}
  (\bibinfo{year}{2008}).

\bibitem{pce}
\bibinfo{author}{Chase, M.}, \bibinfo{author}{Lauter, K.},
  \bibinfo{author}{Benaloh, J.} \& \bibinfo{author}{Horvitz, E.}
\newblock \bibinfo{title}{Patient-controlled encryption: patient privacy in
  electronic medical records}.
\newblock \emph{\bibinfo{journal}{ACM Cloud Computing Security Workshop}}
  (\bibinfo{year}{2009}).

\end{thebibliography}
\bibliographystyle{naturemag}

\clearpage

\section*{Appendix}
In this appendix, we provide formal theorems and proofs of the security
guarantees ensured by the Secure-Sum and three Secure-Inner-Product
protocols, assuming semi-honest parties (possibly curious but following the protocol correctly). 
Extensions to malicious parties can be considered but are not discussed here. 

Secure-Inner-Product protocols 1 and 2 use a third dummy party to help
with the computations while Secure-Inner-Product protocol 3 does not. The
dummy party does not possess inputs or receives meaningful information but
simply helps with the computation (note that for the applications in mind,
the use of a dummy party does not represent a significant obstacle).
Secure-Inner-Product protocols 1 and 3 are defined on quantized data,
while Secure-Inner-Product protocol 2 applies directly to real-valued
data. Finally, Secure-Inner-Product protocol 1 provides
information-theoretic security, Secure-Inner-Product protocol 2 provides
`almost' information-theoretic security (as defined in Theorem
\ref{ip-thm2}) and both protocols require only elementary operations at a
computational level, while Secure-Inner-Product protocol 3 provides
cryptographic security (i.e., it relies on computational-hardness
assumptions) and uses OT protocols (hence non-elementary operations such
as RSA \cite{rsa} encryptions and decryptions).

An important benchmark for the practical consideration of secure protocols
is the number of communication rounds, which require exchange of data over
communications media such as the internet.  With a standard internet
connection and for arbitrary distances this can take no longer than 2--3
seconds but may also dominate the protocol running time. All protocols
proposed here require few communication rounds. The following table
summarizes these properties, where $n$ denotes the vector dimension and
$q$ the quantization level.

\begin{center}
  \begin{tabular}{ l ||  c | c | c | c | c | }
     Protocols & Security & Dummy party & Data & Rounds & Complexity \\ \hline \hline
    Secure-Sum  & IT & no & real & 2 & elem.\ op.\\ \hline \hline
    Secure-Inner-Product 1 & IT & yes & quantized & 3 & elem.\ op.\\ \hline
    Secure-Inner-Product 2 & almost IT & yes & real & 3 & elem.\ op. \\ \hline
     Secure-Inner-Product 3 & crypto & no & quantized & 3 & $O(nq^2)$ OT \\
    \hline
  \end{tabular}
\end{center}

\subsection*{Sum Protocols and Theorems}

For convenience, we restate the Secure-Sum protocol.

\vspace{2ex}\noindent%
{\bf Secure-Sum Protocol.}\\
Inputs: for $i=1,\dots,m$, party $i$ possesses the secret number $x_i \in [0,1]$.\\
Output: each party obtains $s=\sum_{i=1}^m x_i$ (where the addition is over the reals).\\
Protocol:
\begin{enumerate}
\item Each pair of parties exchange privately random numbers. Namely, for all $i,j$ with $i\neq j$, party $i$ provides to party $j$ a random number $R_{ij}$ drawn uniformly at random in $[0,m]$.
\item For each $i$, party $i$ adds to its secret number the random numbers it has received from other parties and subtract the random numbers it has provided to other parties.
In formula, party $i$ computes $S_i=x_i+\sum_{j\in \{1,\dots,m\} \atop{j \neq i}} R_{ji}  - \sum_{j\in \{1,\dots,m\} \atop{j \neq i}} R_{ij} \,\,\mathrm{mod}\, m$. Each party publicly reveals $S_i$.
\item Each party computes $S=\sum_{i=1}^m S_i \,\,\mathrm{mod}\,m$, which equals $s=\sum_{i=1}^m x_i$.
\end{enumerate}

One can define other variants and extensions of this protocol, in which fewer random numbers are exchanged to minimize information flow, or in which more information is exchanged to check the correctness of parties computations (one may also use virtual parties for that). 
\begin{thm}\label{sum-thm}
Let $x_1,\dots,x_m$ be $m$ privately owned real numbers. Let $i \in
\{1,\dots,m\}$ and $\mathrm{View}_i$ denote the view of party i obtained
from the Secure-Sum protocol with inputs $x_1,\dots,x_m$. The protocol
outputs the sum $s=\sum_{i=1}^m x_i$ and the distribution of
$\mathrm{View}_i$ depends on $x_1,\dots,x_m$ only through $s$ and $x_i$.
\end{thm}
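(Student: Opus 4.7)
The plan is to decompose party $i$'s view into a part determined by its own inputs and local randomness, and a part coming from the randomness of other parties, and then to show that the latter is uniform on an affine hyperplane whose defining constant depends on the inputs only through $s$ and $x_i$. Output correctness is immediate from the algebraic identity $\sum_{j} S_j \equiv \sum_{j} x_j \mo m$ (the $R_{ij}$'s cancel in pairs in the double sum), combined with the normalization $s \in [0,m]$. For the distribution claim, I would first observe that $\mathrm{View}_i$ can be reduced to the tuple $(x_i, \{R_{ij}, R_{ji}\}_{j \ne i}, \{S_j\}_{j \ne i})$, since $S_i$ and the output $s$ are deterministic functions of these coordinates. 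Splitting each sum in $S_j$ according to whether the summation index equals $i$ yields, for $j \ne i$,
\[
S_j = x_j + (R_{ij} - R_{ji}) + B_j \mo m, \qquad B_j \bydef \sum_{k \ne i,j} R_{kj} - \sum_{k \ne i,j} R_{jk},
\]
so $B_j$ captures exactly the portion of $S_j$ built from random numbers party $i$ never sees.

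The core step is to show that $(B_j)_{j \ne i}$ is uniform on the hyperplane $H = \bigl\{(b_j) \in (\mR/m\mZ)^{m-1} : \sum_{j \ne i} b_j \equiv 0 \mo m \bigr\}$ and independent of the $R$'s known to party $i$ (and of all $x_k$). Independence is immediate because $(B_j)$ depends only on the $R_{k\ell}$ with $k,\ell \ne i$, which are independent of the remaining randomness and of the inputs. For the uniformity, I would view the additive map $\phi:(R_{k\ell})_{k,\ell \ne i,\,k\ne\ell} \mapsto (B_j)_{j\ne i}$ as a continuous homomorphism between the compact abelian groups $(\mR/m\mZ)^{(m-1)(m-2)} \to H$ and verify surjectivity explicitly: fixing any $j_0 \ne i$, setting $R_{j_0,j} = b_j$ for $j \ne i, j_0$ and every other free $R$ to zero produces $B_j = b_j$ for $j \ne i, j_0$ and $B_{j_0} = -\sum_{j \ne i, j_0} b_j = b_{j_0}$ by the hyperplane constraint. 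Since the source carries the uniform (Haar) probability measure, the pushforward $\phi_*\mu$ is a translation-invariant probability measure on $H$, hence equals the Haar probability measure on $H$ by uniqueness.

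Given this uniformity, conditional on $x_i$ and the $R$'s known to party $i$, the law of $(S_j)_{j \ne i}$ is a deterministic translate of a Haar-uniform random variable on $H$ by the vector $(x_j + R_{ij} - R_{ji})_{j \ne i}$, and is therefore uniform on the affine hyperplane
\[
\bigl\{(s_j)_{j \ne i} : \textstyle\sum_{j \ne i} s_j \equiv (s - x_i) + \sum_{j \ne i}(R_{ij} - R_{ji}) \mo m\bigr\},
\]
whose defining constant depends on $(x_1,\ldots,x_m)$ only through $s - x_i$. Combined with the fact that the remaining coordinates of the view are $x_i$ together with randomness independent of the inputs, this yields that the law of $\mathrm{View}_i$ depends on $(x_1,\ldots,x_m)$ only through $s$ and $x_i$.

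The main obstacle is the uniformity of $(B_j)_{j \ne i}$ on $H$: because one is working in the continuous circle group $\mR/m\mZ$ rather than a finite field, counting preimages is unavailable, so the argument instead constructs a right-inverse of $\phi$ to establish surjectivity and then invokes the Haar-pushforward principle. Everything else is a routine bookkeeping of which coordinates of the view are deterministic functions of which quantities.
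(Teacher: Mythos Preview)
Your proposal is correct and follows essentially the same route as the paper: both reduce $\mathrm{View}_i$ to the secret inputs plus a vector of ``hidden'' noise terms (your $B_j$, the paper's $Z_i$) and then establish that this noise vector is uniform on the zero-sum hyperplane in $(\mR/m\mZ)^{m-1}$, whence the visible shares are uniform on an affine hyperplane determined by $s-x_i$. The only cosmetic differences are that you condition on party $i$'s known randomness whereas the paper first subtracts it out to form a ``sufficient statistic,'' and that you prove uniformity via an explicit right-inverse plus the Haar pushforward principle whereas the paper phrases the same step as a matrix-rank computation plus translation invariance.
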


We provide first the proof argument for $m=3$. Assume that party 1 collects all the data it possesses and received from other parties to try to learn something about their secret numbers. That is, party 1 possesses its secret number $x_1$, the numbers $R_{12},R_{13},R_{21},R_{31}$ exchanged in step 1, the numbers $S_1,S_2,S_3$ revealed in step 2 and the output sum $s$ (whose information is already contained in the $S_i$'s). From these, party 1 can subtract in $S_2,S_3$ the terms depending on $R_{12},R_{13},R_{21},R_{31}$ and obtain the right-hand side of
\begin{align}
x_2 + (R_{32}-R_{23})=S_2+(R_{21}-R_{12})  \,\,\mathrm{mod}\, 3 \label{eq1} \\
x_3- (R_{23}-R_{32})=S_3+(R_{31}-R_{13}) \,\,\mathrm{mod}\, 3 \label{eq2}
\end{align}
and this is all the information party 1 can gather about other parties secret numbers. Adding these equations provides $x_2+x_3=s-x_1$, i.e., what can be deduced from knowing the sum of the secret numbers. To see that nothing else can be inferred from \eqref{eq1} or \eqref{eq2}, note that $R_{32}-R_{23}$ is uniform on $[0,m]$. However, for any fixed number $x \in [0,1]$, if one adds to it a random number $R$ uniformly drawn in $[0,m]$, the number $x+R$ is also uniformly drawn in $[0,m]$. Therefore, \eqref{eq1} (or \eqref{eq2}) does not provide any further information about $x_2$ (or $x_3$).

\begin{proof}[Proof of Theorem \ref{sum-thm}]
All the arithmetic in this proof is modulo $m$.
We first check that the protocol computes indeed the sum. We set $R_{ii}=0$ for all $i$, to simply notations.
This is straightforward since $S_i=x_i + \sum_{j} ( R_{ji}  -  R_{ij} )$ and hence, $\sum_{i=1}^m S_i = \sum_{i=1}^m x_i$.
Let $\mathrm{View}_1$ be the protocol view of party 1, i.e.,
$$\mathrm{View}_1 =\{x_1,R_{1i}, R_{i1}, S_i, \, \forall 1 \leq i \leq m\}.$$
Party 1 can subtract the $R_{ij}$'s it has access to in the $S_i$'s, obtaining $\mathrm{View}'_1$ as a sufficient statistic for $\mathrm{View}_1$, where
$$\mathrm{View}'_1 =\{x_1, I_i,  \forall i \neq 1\}$$
and
\begin{align*}
I_i&=x_i +Z_i\\
Z_i&= \sum_{j \neq 1,i} (R_{ji}-R_{ij})
\end{align*}
Let us define $Z=[Z_2,\dots,Z_m]^t$ and $W= [R_{2},\dots,R_m]^t$, where $R_i$ contains all the $R_{ji}$ for which $j \neq i$ (in increasing order). Note that $Z$ and $W$ are a random vectors of dimension respectively $(m-1)\times 1$ and $m(m-1)\times 1$. We then have that
$$Z=A W- A \Pi W,$$
where $A$ is the $(m-1) \times m(m-1)$ matrix whose $i$-th row is filled with 0's except at columns $[i(m-2)+1,(i+1)(m-2)]$ where it is 1, and $\Pi$ is a permutation matrix. Note that the rank of $A$ and the rank of $M:=A(I-\Pi)$ is $m-2$, implying that $\mathrm{Im}(M)=\Sigma_2^m$, where
$$\Sigma_2^m :=\{ u_2,\dots,u_m \in [0,m]: \sum_{i=2}^m u_i = 0 \}.$$
Therefore, for any $z,d \in \Sigma_{m-1}$, there exists $w$ such that $Mw=d$ and
$$\pp\{ M W \leq z + d\} = \pp\{ M (W-w) \leq z \}= \pp\{ M W \leq z \}$$
where the second equality uses the fact that $W$ and $W-w$ are both i.i.d.\ uniform over $[0,m]$. This shows that $Z=MW$ is uniform over $\Sigma_2^m$ and $I=[I_2,\dots,I_m]$ is uniform over
$$\Sigma_2^m(x_2,\dots,x_m) :=\{ u_2,\dots,u_m \in [0,m]: \sum_{i=2}^m u_i = \sum_{i=2}^m x_i \}.$$ Therefore, the distribution of $\mathrm{View}'_1$, and hence of $\mathrm{View}_1$, depends only on $\sum_{i=2}^m x_i = s-x_1$ and $x_1$. By symmetry, the analogue conclusion holds for any parties, which concludes the proof of the theorem.
\end{proof}

\subsection*{Inner-Product Protocols and Theorems}
We now present secure protocols to compute the sample correlation, or
equivalently the inner product, between two real vectors. Recall that the
sample correlation of two vectors $x=\{x_i\}_{i=1}^t$ and
$y=\{y_i\}_{i=1}^t$ is given by
$$\rho(x,y)=\frac{ \sum_{i=1}^t x_i y_i - t \bar{x}\bar{y} }{(t-1) s_x s_y}= \sum_{i=1}^t \wtil{x}_i \wtil{y}_i $$
where $\bar{x}=\frac{1}{t} \sum_{i=1}^t x_i$, $s_x=(\frac{1}{t-1} \sum_{i=1}^t (x_i -  \bar{x})^2)^{1/2}$,
$\bar{y}=\frac{1}{t} \sum_{i=1}^t y_i$, $s_y=(\frac{1}{t-1} \sum_{i=1}^t (y_i -  \bar{y})^2)^{1/2}$,
$\wtil{x}_i=  \frac{1}{(t-1)^{1/2}} (x_i - \bar{x})/s_x$ and $\wtil{y}_i= \frac{1}{(t-1)^{1/2}} (y_i - \bar{y})/s_y$. 


\begin{definition}
We denote by $\mZ_q$ the set $\{0,1,\dots,q-1\}$, and by $\F_q$ the same set equipped with the Galois field operations when $q$ is a power of a prime. We define by $\Sigma_k(x,\F_q)$ the sets of $k$-tuples in $\F_q$ which add up to $x$, i.e., $$\Sigma_k(x,\F_q):=\{(y_1,\dots,y_k) \in \F_q^k: y_1 + \dots + y_k \mo q=x \}.$$
We may call the $y_i$'s to be shares of $x$.
\end{definition}

\vspace{2ex}\noindent%
{\bf Secure-Inner-Product Protocol 1.}\\
Common inputs: $q \in \mZ_+$ (the quantization level), $n \in \mZ_+$ (the vector dimensions) and $p$ a prime larger than $q^2n$.\\
Party 1 inputs: $x_1,\dots, x_n \in \mZ_q$.\\
Party 2 inputs: $y_1,\dots, y_n \in \mZ_q$.\\
Party 3 inputs: none.

\begin{enumerate}
\item For $i=1,\dots,n$,
party 1 splits $x_i$ in three shares $x_i(1)$, $x_i(2)$ and $x_i(3)$ uniformly drawn in $\Sigma_3(x_i,\F_p):=\{(a,b,c) \in \F_p^3: a+b+c \mo p=x_i \}$ and party 2 splits $y_i$ in three shares $y_i(1)$, $y_i(2)$ and $y_i(3)$ uniformly drawn in $\Sigma_3(y_i,\F_p)$. Party $1$ provides privately to party $2$ the shares $x_i(1),x_i(2)$ and privately to party $3$ the share $x_i(3)$. Party $2$ provides privately to party $1$ the shares $y_i(1),y_i(2)$ and privately to party $3$ the share $y_i(3)$.
\item  Party 1 sets $p_i(1)=(x_i(1)+x_i(3))(y_i(1)+y_i(2))  \mo p$ and $\rho(1)=\sum_{i=1}^n p_i(1) \mo p$, party 2 sets $p_i(2)=y_i(3)(x_i(1)+x_i(2))+ x_i(2)(y_i(1)+y_i(2)) \mo p$ and $\rho(2)=\sum_{i=1}^n p_i(2) \mo p$, and party 3 sets $p_i(3)=x_i(3)y_i(3) \mo p$ and $\rho(3)=\sum_{i=1}^n p_i(3) \mo p$.
For $m=1,2,3$, party $m$ splits $\rho(m)$ in three shares $\rho(m,1), \rho(m,2)$ and $\rho(m,3)$ uniformly drawn in $\Sigma_3(\rho(m),\F_p)$ and reveals privately $\rho(m,k)$ to party $k$, for $k=1,2,3$.

\item For $k=1,2,3$, party $k$ computes $R(k)= \sum_{m=1}^3 \rho(m,k) \mo p$. Parties 1 and 2 exchange $R(1)$ and $R(2)$ and party 3 provides $R(3)$ to parties 1 and 2. Parties 1 and 2 compute $R(1)+R(2)+R(3)=\sum_{i=1}^n x_i y _i$.
\end{enumerate}

\begin{thm}\label{ip-thm}
Let $x=[x_1,\dots,x_n]$ and $y=[y_1,\dots,y_n]$ be two privately owned
vectors on $\F_q^n$. Let $\mathrm{View}_1$ denote the view of party 1
obtained from the Secure-Inner-Product protocol 1 with inputs $x,y$. The
protocol outputs the inner product $\rho=\sum_{i=1}^n x_i y_i$ and the
distribution of $\mathrm{View}_1$ depends on $x,y$ only through $\rho$ and
$x$. The reciprocal result holds for party 2.
\end{thm}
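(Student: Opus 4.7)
My plan is to handle correctness by a direct algebraic identity and then establish secrecy by exhibiting a simulator of party 1's view that depends only on $(x,\rho)$. For correctness, I first verify the pointwise identity
$$p_i(1)+p_i(2)+p_i(3)\equiv(x_i(1)+x_i(2)+x_i(3))(y_i(1)+y_i(2)+y_i(3))=x_iy_i\pmod p,$$
which follows by factoring $(y_i(1)+y_i(2))$ and $y_i(3)$ out of the two summands of $p_i(2)$ and combining with $p_i(1)$ and $p_i(3)$. Summing over $i$ gives $\sum_m\rho(m)\equiv\rho\pmod p$, hence $\sum_k R(k)\equiv\rho\pmod p$. The hypothesis $p>q^2 n$ together with $x_i,y_i\in\{0,\dots,q-1\}$ forces $\rho<p$ as an integer, so the congruence is an equality and parties 1 and 2 recover the true inner product.

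For the secrecy claim, party 1's complete view is
$$\mathrm{View}_1=\bigl(x,(x_i(j))_{i,j},(\rho(1,j))_j,(y_i(1),y_i(2))_i,\rho(2,1),\rho(3,1),R(2),R(3)\bigr).$$
The main tool is the perfect-hiding property of uniform 3-out-of-3 additive sharing over $\F_p$: if $(a,b,c)$ is uniform on $\Sigma_3(s,\F_p)$, then any one share is marginally uniform on $\F_p$ and any two shares are jointly uniform on $\F_p^2$, in both cases independently of $s$. Applied to party 2's splittings of the $y_i$'s, this shows $(y_i(1),y_i(2))_i$ is uniform on $\F_p^{2n}$ and independent of $y$. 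Applied to parties 2's and 3's splittings of $\rho(2)$ and $\rho(3)$, it shows $\rho(2,1)$ and $\rho(3,1)$ are independent uniforms on $\F_p$, each independent of the corresponding $\rho(m)$. The triple $(\rho(1,j))_j$ is generated from party 1's own randomness starting from a value $\rho(1)$ that is already determined by earlier view entries.

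Writing $R(2)=\rho(1,2)+\rho(2,2)+\rho(3,2)$, party 1 knows $\rho(1,2)$, while $\rho(2,2)$ and $\rho(3,2)$ are generated by parties 2 and 3 with fresh independent randomness in their respective 3-splits of $\rho(2)$ and $\rho(3)$; given $\rho(m,1)$, each $\rho(m,2)$ is uniform on $\F_p$ and independent of $\rho(m)$, so the sum $\rho(2,2)+\rho(3,2)$ is uniform and independent of all earlier view entries. Hence $R(2)$ is uniform on $\F_p$ from party 1's perspective. Finally $R(3)=\rho-R(1)-R(2)\pmod p$ is a deterministic function of $\rho$ and quantities already in the view (since $R(1)=\rho(1,1)+\rho(2,1)+\rho(3,1)$ is computable by party 1), so it contributes no additional information.

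The main subtlety is promoting these marginal uniformity claims to a joint independence statement; this requires confirming that the separate randomness pools used by the three parties across all splittings inject enough independent dimensions into the view, a rank computation in the same spirit as the proof of Theorem~\ref{sum-thm}. With this in hand, the simulator that draws $(x_i(j))$ uniformly on $\Sigma_3(x_i,\F_p)$, $(y_i(1),y_i(2))$ uniformly on $\F_p^{2n}$, $\rho(2,1)$ and $\rho(3,1)$ uniformly on $\F_p$, and $R(2)$ uniformly on $\F_p$, and then computes $\rho(1)$, $(\rho(1,j))_j$, and $R(3)$ from $\rho$ and the view, reproduces the real distribution of $\mathrm{View}_1$; so $\mathrm{View}_1$ depends on $(x,y)$ only through $(x,\rho)$. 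The reciprocal statement for party 2 follows by the symmetric argument with the roles of $x$ and $y$ (and the share indices $\{1,2\}\leftrightarrow\{1,3\}$) exchanged.
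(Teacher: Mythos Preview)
Your proposal is correct and follows essentially the same strategy as the paper. The correctness argument is identical, including the observation that $p>q^2n$ turns the congruence into an equality over the integers. For secrecy, the paper compresses your simulator analysis by noting that steps 2--3 of the protocol amount to running the Secure-Sum protocol on the inputs $\rho(1),\rho(2),\rho(3)$ and then invoking Theorem~\ref{sum-thm} directly; you instead unfold that argument in place, using the perfect-hiding property of additive $3$-out-of-$3$ sharing to show each incoming piece of the view is uniform and independent, and then reconstructing $R(3)$ from $\rho$. The ``subtlety'' you flag about joint independence is exactly what Theorem~\ref{sum-thm} packages up; in your direct route it reduces to the observation that $(\rho(2,1),\rho(3,1),\rho(2,2),\rho(3,2))$ are four fresh independent uniforms from which $(\rho(2,1),\rho(3,1),R(2))$ is a surjective affine image, hence jointly uniform on $\F_p^3$ regardless of $\rho(2),\rho(3)$---so no genuine rank computation is needed beyond this. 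The two write-ups are thus the same proof at different levels of granularity: the paper's is shorter by reusing its earlier theorem, yours is more self-contained.
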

\begin{proof}[Proof of Theorem \ref{ip-thm}]
The arithmetic is on $\F_p$ in the following. We first check that the
protocol computes indeed the inner product. For every $i=1,\dots,n$,
$p_i(1)+p_i(2)+p_i(3)=x_i y_i$, hence
$$\sum_{i=1}^n (p_i(1)+p_i(2)+p_i(3))=\rho(1)+\rho(2)+\rho(3)=\sum_{i=1}^n x_i y_i.$$
Moreover, $\sum_{k=1}^3 \rho(m,k)=\rho(m)$, hence
$$\sum_{k=1} R(k)=\sum_{k=1} \sum_{m=1}^3 \rho(m,k) = \sum_{m=1}^3 \rho(m)=\sum_{i=1}^n x_i y_i.$$
Let $\mathrm{View}_1$ be the protocol view of party 1, which is a function of
$$\mathrm{View}_1' =\{x, y(1), y(2), \rho(2,1), \rho(3,1), R(2), R(3) \},$$
where $y(1)$ contains all components $y_i(1)$ for $i=1,\dots,n$ and similarly for the $y(2)$. Note that for $i=1,\dots,n$, $(p_i(1), p_i(2), p_i(3))$ are independent and uniformly drawn in $\Sigma_3(p_i,\F_p)$, where $p_i=x_iy_i$. Moreover, step 2.\ and 3.\ of the protocol are equivalent to running the secure-sum-protocol on $\rho(1),\rho(2),\rho(3)$. Hence, from Theorem \ref{sum-thm}, for any realization of $\rho(1),\rho(2),\rho(3)$, the distribution of $\rho(2,1), \rho(3,1), R(2), R(3)$ depends only on the sum $\rho(1)+\rho(2)+\rho(3)=\sum_{i=1} p_i$ and on $\rho(1)$, where $\rho(1)$ depends only on $x$ and on $y(1), y(2)$ which are independent and uniformly distributed over $\F_p$. Therefore, the distribution of $\mathrm{View}_1'$, hence $\mathrm{View}_1$, depends only on $\sum_{i=1} p_i=\rho$ and on $x$.
\end{proof}

\vspace{2ex}\noindent%
{\bf Secure-Inner-Product Protocol 2.}\\
Common input: $n \in \mZ_+$ (the vector dimensions) and $\tau \geq n$\\
Party 1 inputs: $x_1,\dots, x_n \in [0,1]$.\\
Party 2 inputs: $y_1,\dots, y_n \in [0,1]$.\\
Party 3 inputs: none.

\begin{enumerate}
\item For $i=1,\dots,n$,
\begin{enumerate}
\item party 1 splits $x_i$ in three shares by evaluating a random polynomial $t \mapsto X_i(t)$ at $(t_1,t_2,t_3)=(1/4,1/2,3/4)$, where $X_i(t)=x_i + a_i t \mo \tau$ and where $a_i$ is uniformly drawn in $[0,\tau]$.  Party 1 reveals $X_i(t_j)$ to party $j$ for $j=2,3$,
\item party 2 splits $y_i$ in three shares $Y_i(t_j)=y_i+b_i t_j \mo \tau$, for $j=1,2,3,$ where
$b_i$ is uniformly drawn in $[0,\tau]$, and reveals $Y_i(t_j)$ to party $j$ for $j=1,3$.
\end{enumerate}
\item For $j=1,2,3$,
\begin{enumerate}
\item party $j$ computes $P(t_j)=\sum_{i=1}^t X_i(t_j)Y_i(t_j) \mo \tau$,
\item party $j$ draws $\alpha_j,\beta_j$ independently and uniformly at random in $[0,\tau]$ and for $k=1,2,3$, sets $Z_j(t_k)=\alpha_j t_k + \beta_j t_k^2 \mo \tau$ and shares $Z_j(t_k)$ with party $k$,
\item $\rho(t_j)=P(t_j) + \sum_{k=1}^3 Z_k(t_j) \mo \tau$ is made available to parties 1 and 2.
\end{enumerate}
\item Party 1 and 2 compute $\rho(0)$ by interpolating a degree 2 polynomial on $\rho(t_j)$, $j=1,2,3$, obtaining $\rho(0)=\sum_{i=1}^n x_i y_i$.
\end{enumerate}

\begin{thm}\label{ip-thm2}
Let $x=[x_1,\dots,x_n]$ and $y=[y_1,\dots,y_n]$ be two privately owned
real vectors on $[0,1]^n$, where $n$ is fixed. Let $\mathrm{View}_1$
denote the view of party 1 obtained from the Secure-Inner-Product protocol
2 (over the reals) with inputs $x,y$. The protocol outputs the inner
product $\rho=\sum_{i=1}^n x_i y_i$ and the distribution of
$\mathrm{View}_1$ can be approximated  arbitrarily close (in total
variation distance and when $\tau$ increases) by a distribution depending
on $x,y$ only through $\rho$ and $x$. The reciprocal result holds for
party 2.
\end{thm}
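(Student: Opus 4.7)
The plan is to emulate the proof of Theorem~\ref{ip-thm}, replacing the exact uniformity of shares in $\F_p$ by the approximate uniformity available on $[0,\tau]$ when $\tau$ is large, while tracking total-variation error at each step. Correctness is immediate: each $X_i(t)=x_i+a_it$ and $Y_i(t)=y_i+b_it$ has degree one, so $P(t)=\sum_{i=1}^n X_i(t)Y_i(t)$ is a degree-two polynomial with $P(0)=\rho$; the re-randomizer $Z(t):=\sum_{k=1}^3 Z_k(t)=(\sum_k\alpha_k)\,t+(\sum_k\beta_k)\,t^2$ has degree two and vanishes at $t=0$, so $\rho(t)=P(t)+Z(t)\mo\tau$ has degree two with $\rho(0)=\rho$, and the three evaluations at $t_1,t_2,t_3$ suffice to interpolate $\rho(0)$.

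For privacy, party 1's view beyond its own randomness consists of $\{Y_i(t_1)\}_{i=1}^n$, $Z_2(t_1)$, $Z_3(t_1)$, and $(\rho(t_1),\rho(t_2),\rho(t_3))$. I would build a simulator $\tilde\mu_{x,\rho}$ depending only on $(x,\rho)$: draw $\tilde Y_i\sim U[0,\tau/4]$ independently, draw $\tilde Z_2(t_1),\tilde Z_3(t_1)$ from their true (secret-independent) marginals, and produce $(\tilde\rho(t_1),\tilde\rho(t_2),\tilde\rho(t_3))$ by evaluating at $t_1,t_2,t_3$ a random degree-two polynomial with constant term $\rho$ and two other coefficients drawn uniformly on $[0,\tau]$. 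The target is to show $\mathrm{TV}(\mathrm{View}_1,\tilde\mu_{x,\rho})\to 0$ as $\tau\to\infty$.

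The TV distance would be controlled in three pieces. For the $Y$-shares, provided $\tau$ is large enough that no wraparound occurs, $Y_i(t_1)=y_i+b_i/4\sim U[y_i,y_i+\tau/4]$, so the TV distance from $U[0,\tau/4]$ is at most $4y_i/\tau\le 4/\tau$; a product coupling over the $n$ coordinates contributes $O(n/\tau)$. The $Z$-shares contribute nothing, as their laws do not depend on any secret. For $(\rho(t_1),\rho(t_2),\rho(t_3))$, note that party 1 can itself compute $P(t_1)=\sum_i X_i(t_1)Y_i(t_1)\mo\tau$, so $\rho(t_1)$ adds no new information beyond the $Z$-shares already in the view; the remaining content is the pair of coefficients $d_1=\sum_i(a_iy_i+b_ix_i)+\alpha\mo\tau$ and $d_2=\sum_ia_ib_i+\beta\mo\tau$ with $\alpha=\sum_k\alpha_k$ and $\beta=\sum_k\beta_k$. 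Using that adding an independent $U[0,\tau]$ variable modulo $\tau$ yields a $U[0,\tau]$ output, one argues that $(d_1,d_2)\mo\tau$ is close to $U[0,\tau]^2$ even after conditioning on the one linear combination $Z_2(t_1)+Z_3(t_1)=(\alpha_2+\alpha_3)/4+(\beta_2+\beta_3)/16\mo\tau$ that party 1 observes, with residual TV error $O(1/\tau)$. Combining the three pieces by triangle inequality gives $\mathrm{TV}(\mathrm{View}_1,\tilde\mu_{x,\rho})=O(n/\tau)$, which vanishes as $\tau\to\infty$ at fixed $n$. The reciprocal statement for party 2 follows from the $x\leftrightarrow y$ symmetry of the protocol.

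The main obstacle is this last polynomial piece: multiplying a $U[0,\tau]$ variable by $t_j<1$ yields a non-uniform variable, so the clean Shamir-style argument of Theorem~\ref{ip-thm} does not transfer directly. The key observation to push through is that only a single one-dimensional linear combination of $(\alpha_2+\alpha_3,\beta_2+\beta_3)$ is revealed through $Z_2(t_1)+Z_3(t_1)$, leaving an orthogonal direction with fresh $U[0,\tau]$ randomness; this residual uniformity is what masks the dependence of $(d_1,d_2)$ on the unknown quantities $c_1=\sum_i(a_iy_i+b_ix_i)$ and $c_2=\sum_ia_ib_i$ up to $O(1/\tau)$ in total variation.
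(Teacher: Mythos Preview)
The paper explicitly omits the proof of Theorem~\ref{ip-thm2} (``We omit the proof of this theorem to conserve space\ldots''), so there is no paper proof to compare against; I can only assess correctness.

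Your overall architecture---mimic Theorem~\ref{ip-thm} while tracking total-variation error from the real arithmetic---is the right one, and your treatment of correctness and of the blocks $\{Y_i(t_1)\}$ and $\{Z_2(t_1),Z_3(t_1)\}$ is fine. The polynomial piece, however, contains a genuine error. You assert that, after conditioning on what party~1 sees, $(d_1,d_2)\mo\tau$ is within $O(1/\tau)$ in total variation of $U[0,\tau]^2$, and your simulator draws the two non-constant coefficients independently uniform. This cannot hold: you yourself note that $\rho(t_1)$ is computable by party~1, and $\rho(t_1)=\rho+d_1 t_1+d_2 t_1^2\mo\tau$ forces the linear constraint $4d_1+d_2=16(\rho(t_1)-\rho)\mo\tau$. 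Equivalently, party~1 sees $Z_2(t_1)$ and $Z_3(t_1)$ \emph{separately}, each pinning $(\alpha_j,\beta_j)$ to a line in the \emph{same} direction, so $(\alpha_2+\alpha_3,\beta_2+\beta_3)$---and hence $(d_1,d_2)$---retains only one degree of freedom. The conditional law of $(d_1,d_2)\mo\tau$ is therefore supported on a one-dimensional set, and its TV distance to $U[0,\tau]^2$ is $1$, not $O(1/\tau)$.

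The repair is not to seek two-dimensional uniformity but to show that the one-dimensional law already depends on $y$ only through $\rho$. The constraint line $4d_1+d_2=16(\rho(t_1)-\rho)$ involves $y$ only through $\rho$ (since $\rho(t_1)=P(t_1)+Z_1(t_1)+Z_2(t_1)+Z_3(t_1)$ is a function of party~1's view); in your notation this is the identity $4c_1+c_2=16\bigl(P(t_1)-\rho\bigr)$, with $P(t_1)=\sum_i X_i(t_1)Y_i(t_1)$ known to party~1. What remains is to show that the position \emph{along} this line---say $d_2\mo\tau=c_2+\beta_1+(\beta_2+\beta_3)\mo\tau$---has a conditional law that is (approximately, as $\tau\to\infty$) independent of the residual secret $\sum_i a_i y_i$ hidden in $c_2$. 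This is where your ``orthogonal direction'' heuristic is actually needed, and it requires an honest argument that $\beta_2+\beta_3\mo\tau$, conditioned on $(Z_2(t_1),Z_3(t_1))$, is close to uniform on $[0,\tau)$ with a quantified TV error; your sketch does not supply this, and it is not a one-line consequence of ``adding a $U[0,\tau]$ variable.'' Once you fix the simulator to respect the line constraint and carry out this one-dimensional estimate, the rest of your plan goes through.
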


\noindent%
We omit the proof of this theorem to conserve space since it does not
concern the main scope of the paper. We refer to Theorem \ref{ip-thm} for
a proof of a Secure Inner-Product protocol, which can be used on real data
via quantization.

We provide a third protocol to compute securely the inner-product function
without using a third dummy party but ensuring only cryptographic
security. This protocol uses the Oblivious Transfer (OT) protocol,
developed by \cite{rabinOT,OT}, which is an important protocol for
multi-party computations as it allows to compute in particular secret
shares of the product $x\cdot y$ of two bits $x$ and $y$, and can then be
used in the computation of more general circuit computations. The basic OT
protocol allows a sender to transfer one of potentially many bits to a
receiver; however, the sender remains oblivious as to what bit the
receiver wants and the receiver remains oblivious about any other bits
than the one he has requested. In other words, the functionality in the OT
protocol takes the bits $(b_1,\dots,b_k)$ as inputs for the first party
and the index $i$ for the second party, and produces as output nothing for
the first party and the bit $b_i$ requested by the second party. Formally,
$$\text{OT}_1^k((b_1,\dots,b_k),i)=(\lambda,b_i),$$
where $\lambda$ denotes the no information symbol.
We now describe OT$_1^2$.

\subsubsection*{OT$_1^2$ protocol}

Sender inputs: $(b_0,b_1) \in \{0,1\}^2$ and a private key $(n,d)$.\\
Receiver inputs: $i \in \{0,1\}$ and a public key $(n,e)$.\\

\noindent
Algorithm:
\begin{enumerate}
\item The sender generates two random numbers $x_0,x_1$ and transmit them to the receiver.
\item The receiver generates a random number $k$, encrypts it with the public key and scrambles the outcome with $x_i$ to produce $c=(x_i+k^e) \mod n$
\item The sender decrypts the two numbers $(c-x_0)$ and $(c-x_1)$ to get $k_0$ and $k_1$ respectively (i.e., it computes $k_j=(c-x_j)^d \mod n$ for $j=0,1$). Note that either $k_0$ or $k_1$ is equal to $k$, but these are equally likely for the sender, and reciprocally, $k_{i\oplus 1}$ is not accessible to the receiver. The sender then transmits $a_0=b_0 + k_0$ and $a_1=b_1 + k_1$.
\item The receiver finds $b_i=a_i - k$.
\end{enumerate}
The OT$_1^k$ protocol is easily obtained by extending previous protocol to multiple sender bits, ad similarly, one can extend the protocol to non binary fields.

We now present a cryptographic protocol for the inner product.

\vspace{2ex}\noindent%
{\bf Secure-Inner-Product Protocol 3.}\\
Common inputs: $q$ (the quantization level), $n$ (the vector dimensions).\\
Party 1 inputs: $x_1,\dots, x_n \in \mZ_q$.\\
Party 2 inputs: $y_1,\dots, y_n \in \mZ_q$.

\begin{enumerate}
\item For $i=1,\dots,n$,
\begin{enumerate}
\item party 1 picks $x_i(2)$ uniformly at random in $\mZ_{nq^2}$ and reveals it to party 2,
who picks $y_i(1)$ uniformly at random in $\mZ_{nq^2}$ and reveals it to party 1.
\item party 1 picks $a_{i}(1)$ uniformly at random in $\mZ_{nq^2}$ and sends
$$\{-a_{i}(1), -a_{i}(1) + x_i(1), -a_{i}(1) + 2 x_i(1), -a_{i}(1) + 3x_i(1),\dots, -a_{i}(1) + (nq^2 -1) x_i(1)\}$$
(all operations $\mo nq^2$) with OT$_1^{nq^2}$ to party 2 who picks the $y_i(2)$-th element.
\item party 2 picks $b_{i}(2)$ uniformly at random in $\mZ_{nq^2}$ and sends
$$\{-b_{i}(2), -b_{i}(2) + x_i(2), -b_{i}(2) + 2 x_i(2), -b_{i}(2) + 3x_i(2),\dots, -b_{i}(2) + (tq^2 -1) x_i(2)\}$$
(all operations $\mo nq^2$) with OT$_1^{nq^2}$ to party 1 who picks the $y_i(1)$-th element.
\item party 1 computes $p_i(1)=x_i(1)y_i(1)+a_i(1)+b_i(1) \mo nq^2$ and party computes $p_i(2)=x_i(2)y_i(2)+a_i(2)+b_i(2) \mo nq^2$. Note that these are shares of the product $x_iy_i$.
\end{enumerate}
\item Party 1 computes $\rho(1)=\sum_{i=1}^n p_i(1) \mo nq^2$ and reveals it to party 2, who computes $\rho(2)=\sum_{i=1}^n p_i(2) \mo nq^2$ and reveals it to party 1.
\item Each party computes $\rho(1)+\rho(2) \mo nq^2=\sum_{i=1}^n x_i y_i$.
\end{enumerate}

\begin{figure}
\begin{center}
\includegraphics[scale=.4]{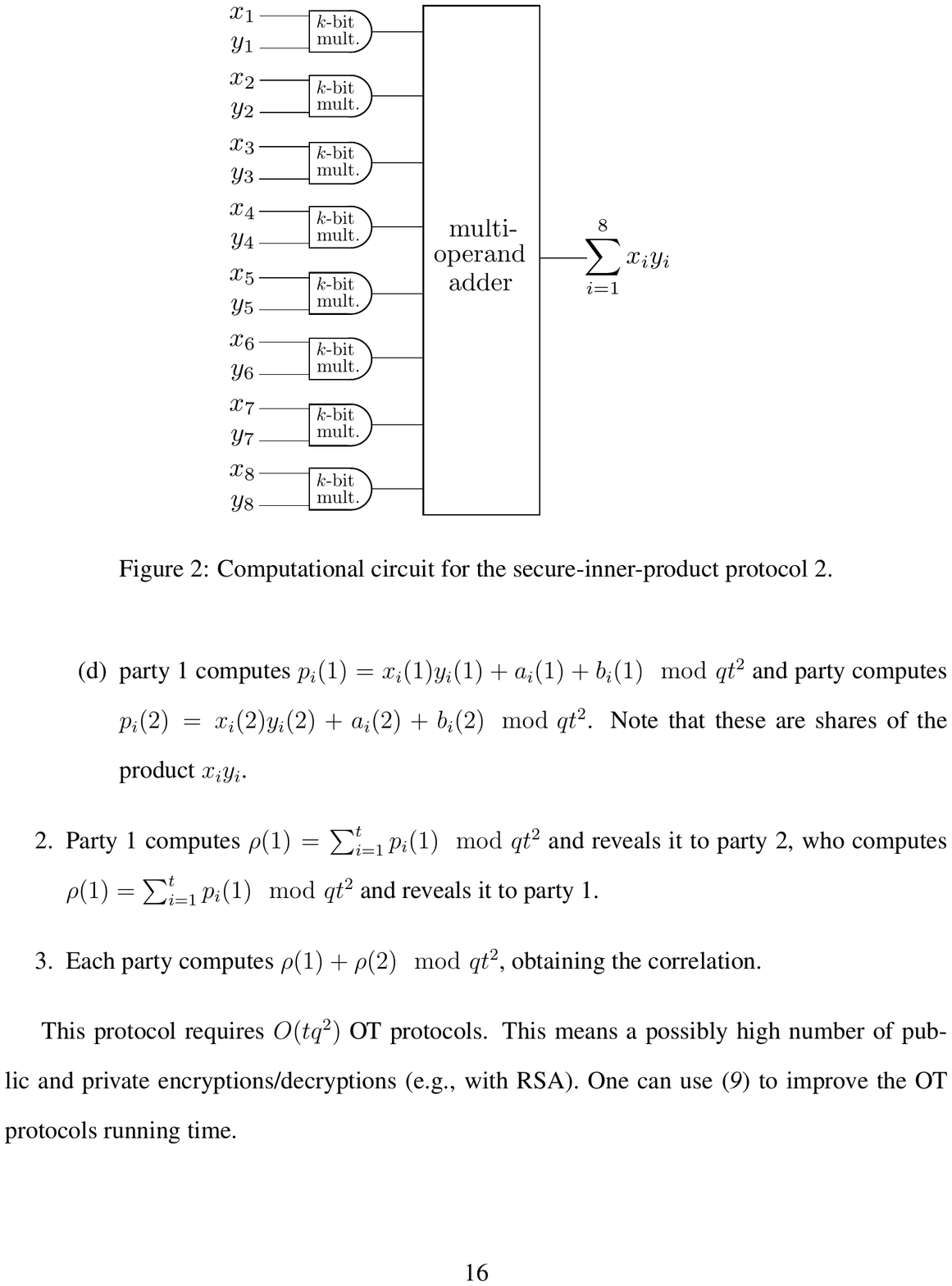}
\caption{Computational circuit for the inner product $\sum_{i=1}^8 x_i y_i$, when the inputs are $k$-bit numbers.}
\label{circ_operand}
\end{center}
\end{figure}

\noindent%
From the protocol construction, we have the following result.
\begin{lemma}
Secure-Inner-Product protocol 3 privately reduces the correlation
computation to the OT protocol.
\end{lemma}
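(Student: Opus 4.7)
The plan is to establish the lemma in the OT-hybrid model, where $\mathrm{OT}_1^{nq^2}$ is regarded as an ideal functionality that delivers the chosen element to the receiver and nothing to the sender. Once security in this hybrid model is shown, the standard composition theorem combined with the security of the concrete RSA-based $\mathrm{OT}_1^2$ protocol described just above yields cryptographic security of the whole Secure-Inner-Product protocol 3, which is what ``privately reduces to the OT protocol'' means. Inside the hybrid model the argument splits into (i) verifying correctness and (ii) exhibiting, for each of the two parties, a simulator depending only on that party's input and the public output $\rho$.

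For correctness, I would unpack what the two OT invocations deliver. In step 1(b), party 2 obtains $a_i(2) := -a_i(1) + y_i(2) x_i(1) \mo nq^2$, so $a_i(1)+a_i(2) = y_i(2)x_i(1)$; symmetrically, step 1(c) yields $b_i(1)+b_i(2) = y_i(1)x_i(2)$. Adding these to $x_i(1)y_i(1)+x_i(2)y_i(2)$ telescopes into $(x_i(1)+x_i(2))(y_i(1)+y_i(2)) = x_iy_i$, and summing over $i$ gives $\rho(1)+\rho(2) \equiv \sum_i x_iy_i \pmod{nq^2}$. Since $\sum_{i=1}^n x_iy_i \leq n(q-1)^2 < nq^2$, the integer inner product is recovered without modular wrap-around.

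For privacy, I would exhibit a simulator $S_1(x,\rho)$ that samples $x_i(2), y_i(1), a_i(1), b_i(1)$ independently and uniformly in $\mZ_{nq^2}$, computes $\rho(1)=\sum_i \bigl(x_i(1) y_i(1) + a_i(1) + b_i(1)\bigr) \mo nq^2$ with $x_i(1) := x_i - x_i(2) \mo nq^2$, and finally sets $\rho(2) := \rho - \rho(1) \mo nq^2$. Equality in distribution with party 1's real hybrid-model view rests on four observations: party 1 itself draws $x_i(2)$ and $a_i(1)$ uniformly; party 2's fresh $y_i(1)$ is uniform and independent of its secret; the OT-received value $b_i(1) = -b_i(2) + y_i(1) x_i(2)$ is uniform conditional on $(y_i(1), x_i(2))$ because $b_i(2)$ is an independent uniform mask; and $\rho(2)$ is forced by correctness to equal $\rho - \rho(1)$. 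In the hybrid model the ideal OT hides party 2's index $y_i(2)$ from party 1, so $y_i(2)$ never enters party 1's view. The argument for party 2 is symmetric.

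The delicate point is the joint independence of the masking terms relative to \emph{all} of party 1's observables. One must verify that $a_i(1)$ is not leaked anywhere downstream: it appears only as party 1's sender input to the OT of step 1(b) (hidden by the ideal functionality) and then implicitly through $\rho(2)=\rho-\rho(1)$, where it is absorbed into the single scalar $\rho(1)$ already consistent with $\rho$. Likewise, one must check that $b_i(1)$ is genuinely distributed as a uniform one-time pad on $y_i(1) x_i(2)$, which requires tracking that $b_i(2)$ is fresh, uniform, and never revealed to party 1 in any other form. Once this bookkeeping is in place the simulated and real views coincide exactly, so no computational indistinguishability is invoked inside the hybrid model; the cryptographic qualifier of the lemma is inherited from the RSA-based instantiation of the underlying OT.
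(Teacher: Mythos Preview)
The paper does not actually supply a proof of this lemma; it simply states that the result follows ``from the protocol construction'' and defers to Goldreich for the formal notion of private reducibility and the composition theorem. Your argument is correct and is precisely the standard OT-hybrid simulator construction that the paper tacitly invokes---correctness via the telescoping of the additive shares, and per-party simulators built from the uniform one-time pads $a_i(1),b_i(2)$ together with the forced value $\rho(2)=\rho-\rho(1)$---so you are filling in details the paper omits rather than departing from its approach.
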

The notion of being ``privately reducible'' is formally defined in Section
2.2.\ of \cite{goldreich}. From the composition theorem for the
semi-honest setting in Section 2.2.\ of \cite{goldreich}, one obtains as a
consequence of the previous lemma that Secure-Inner-Product protocol 3
privately computes the inner product provided the existence of trapdoor
one-way permutations. In particular, using RSA for the encryptions in OT,
the protocol is secure provided that RSA cannot be broken.

This protocol requires $O(nq^2)$ OT protocols but only three communication
rounds. This still means a possibly high number of public and private
encryptions/decryptions (e.g., with RSA). One may use \cite{amort} to
improve the OT protocols running time. Another approach consist in using a
Boolean circuit for correlations as in Figure \ref{circ_operand}, using OT
protocols to compute shares of the multiplication gates (and simply adding
shares for the XOR gates). Such an approach, as developed in \cite{gmw},
or related approaches as in \cite{yao, bmr}, may be particularly useful
for other functions such as for the quantile function, which does not have
the arithmetic structure of the summation or inner-product functions. In
particular, \cite{yao,bmr} provide protocols with constant communication
rounds which may matter for practical considerations, although for real
data problems, the practicality of such algorithms need to be further
investigated.

\subsection*{Related literature on MPCs}\label{sec_literature}
\subsubsection*{Theory}
The problem of secure multi-party computation emerged with the work of Yao
\cite{yao} in 1982, and with the work of Goldreich, Micali and Wigderson
\cite{gmw} in 1987. It is shown in \cite{yao} that any Boolean
functionality can be computed without requiring an external trusted party
for two parties, and \cite{gmw} provides protocols for arbitrarily many
parties. Since these papers, many have proposed variations of MPC
settings, allowing different kinds of adversarial parties, security, and
efficiency attributes. In particular, \cite{bmr} introduces cryptographic
protocols with bounded circuit depths (requiring finitely many
communication rounds) and \cite{bgw,ccd,cddhr} develop
information-theoretic protocols. Homomorphic encryption has also been
shown to provide another approach to secure multi-party computations
\cite{cramer,franklin}, and more recently, Gentry \cite{gentry} showed
that fully homomorphic encryption schemes can be constructed, allowing
addition and multiplication to be performed on encrypted data without
having to decrypt it. This approach leads to MPC protocols that do not
have communication rounds increasing with the circuit complexity, although
fully homomorphic encryption is still considered impractical. For certain
functionality, progress regarding practical fully homomorphic encryption
have been achieved in \cite{bv} with somewhat fully homomorphic
encryptions schemes using the learning-with-errors assumption.

\subsubsection*{Applications}
The main applications associated with MPCs in the literature include
distributed voting \cite{vote}, private bidding and auctions
\cite{auction}, data mining \cite{datamining}, and sharing of signature
\cite{sign}. MPCs have been used for the first time in a real-world
application only in 2008, when 1,200 farmers in Denmark employed an MPC
protocol in a nation-wide auction to determine the market price of
sugar-beets contracts without revealing their selling and buying prices
\cite{live}. The whole computation took about half an hour, a satisfactory
time for this application. In a different context, \cite{pce} introduces
``Patient Controlled Encryption'' scheme, where an electronic health record
system allowing searches to be done on encrypted data is developed.

\end{document}